\newtheorem{theorem}{Theorem}
\newtheorem{definition}{Definition}
\newtheorem{example}{Example}
\newtheorem{corollary}{Corollary}
\newtheorem{lemma}{Lemma}
\newtheorem{remark}{Remark}
\begin{document}
%
\title{Construction $C^\star:$ an inter-level coded version of Construction C}

\author{\IEEEauthorblockN{Maiara F. Bollauf \IEEEauthorrefmark{1},
Ram Zamir \IEEEauthorrefmark{2} and
Sueli I. R. Costa \IEEEauthorrefmark{3}}
\vspace{0.2cm}
\IEEEauthorblockA{\IEEEauthorrefmark{1} \IEEEauthorrefmark{3} Institute of Mathematics, Statistic and Computer Science\\
University of Campinas, Sao Paulo, Brazil\\ Email: maiarabollauf@ime.unicamp.br, sueli@ime.unicamp.br}
\vspace{0.2cm}
\IEEEauthorblockA{\IEEEauthorrefmark{2} Deptartment Electrical Engineering-Systems \\
Tel Aviv University, Tel Aviv, Israel \\ Email: zamir@eng.tau.ac.il}}

\maketitle

\begin{abstract} 

	Besides all the attention given to lattice constructions, it is common to find some very interesting nonlattice constellations, as Construction C, for example, which also has relevant applications in communication problems (multi-level coding, multi-stage decoding, good quantization efficieny).  In this work we present a constellation which is a subset of Construction C, based on inter-level coding, which we call Construction $C^\star.$ This construction may have better immunity to noise and it also provides a simple way of describing the Leech lattice $\Lambda_{24}.$  A condition under which Construction $C^{\star}$ is a lattice constellation is given.
	
	
\end{abstract}

{\small \textbf{\textit{Index terms}---Lattice construction, Bit-interleaved coded modulation (BICM), Construction $C^\star$, Construction by Code-Formula, Leech lattice.}}


%
\IEEEpeerreviewmaketitle

\section{Introduction}

		Communication problems involve, in general, transmitting digital information over a channel with minimum losses. One way to approach it is by using coded modulation \cite{debuda1}, where not only coding, but also mapping the code bits to constellation symbols is significant. In the latest years, a prevalent coded modulation scheme is the bit-interleaved coded modulation (BICM), which is the motivation to our study.
		
		BICM, first introduced by Zehavi \cite{zehavi,caire}, asks mainly to have: an $nL-$dimensional binary code $\mathcal{C},$ an interleaver (permutation) $\alpha$ and a one-to-one binary labeling map $\mu: \{0,1\}^{L} \rightarrow \mathcal{X},$ where $\mathcal{X}$ is a signal set $\mathcal{X}=\{0,1, \dots, 2^{L}-1\}$ in order to construct a constellation $\Gamma_{BICM}$ in $\mathcal{X}^{n} \subseteq \mathbb{R}^{n}.$ The code and interleaveled bit sequence $c$ is partitioned into $L$ subsequences $c_i$ of length $n:$
\begin{equation}
c=(c_{1}, \dots, c_{L}), \ \ \mbox{with} \ \ c_{i}=(c_{i1},c_{i2}, \dots, c_{in}). 
\end{equation}

	The bits $c_{1j}, \dots, c_{Lj}$ are mapped at a time index $j$ to a symbol $x_{i}$ chosen from the $2^{L}-$ary signal constellation $\mathcal{X}$ according to the binary labeling map $\mu.$ Hence, for a $nL-$binary code $\mathcal{C}$ to encode all bits, then we have the scheme  below:
	
{\small \begin{center}

$\boxed{\text{codeword} \ (c)}$ $\rightarrow$ $\boxed{\text{interleaver} \ \alpha }$ $\rightarrow$ $\boxed{\text{partitioning into} \ $L$ \ \text{subsequences of length} \ $n$}$ $\rightarrow$ $\boxed{\text{mapping} \ \mu}$ $\rightarrow$ $\boxed{x_{j}=\mu(c_{1j}, \dots, c_{Lj}), \ j={1, \dots, n}}$

\end{center}}
	
	Under the natural labeling $\mu(c_{1}, c_{2},, \dots, c_{L})=c_{1}+2c_{2}+\dots+2^{L-1}c_{L}$ and assuming identity interleaver $\alpha(\mathcal{C})=\mathcal{C},$ it is possible to define an extended BICM constellation in a way very similar to the well known multilevel Construction C, that we call Construction $C^{\star}.$
	
	The constellation produced via Construction $C^\star$ is always a subset of the constellation produced via Construction C for the same projection codes (as defined below) and it also does not usually produce a lattice. The objective of our paper is to explore this new construction, aiming to find a condition that makes it a lattice and also to describe the Leech lattice $\Lambda_{24}$ with Construction $C^\star.$
	
	The paper is organized as follows: Section II shows some preliminary definitions; in Section III we introduce Construction $C^{\star},$ illustrate it with examples and also show how to describe the Leech lattice using this construction; in Section IV we exhibit a condition for $\Gamma_{C^{\star}}$ to be a lattice and Section V is devoted to conclusions.


 

\section{Mathematical background}
	
	In this section, we will introduce the basic concepts, notation and results to be used in the sequel. We will denote by $+$ the real addition and by $\oplus$ the sum in $\mathbb{F}_{2},$ i.e., $x \oplus y=(x+y)mod \ 2.$
	
%
%

\begin{definition}(Lattice) A lattice $\Lambda \subset \mathbb{R}^{N}$ is a set of integer linear combinations of independent vectors $v_{1}, v_{2}, \dots, v_{n} \in \mathbb{R}^{N},$ with $n \leq N.$ 
\end{definition}
	
	It is possible to derive lattice constellations from linear codes using the known Constructions $A$ and $D$ \cite{conwaysloane}. 
	
	\begin{definition}(Construction A) Let $\mathcal{C}$ be a linear $(n,k,d)-$binary code. We define the binary Construction A as
\begin{equation}
\Lambda_{A}=\mathcal{C} + 2\mathbb{Z}^{n}.
\end{equation}
\end{definition}

\begin{definition} (Construction D)  Let $\mathcal{C}_{1} \subseteq \dots \subseteq \mathcal{C}_{L} \subseteq \mathbb{F}_{2}^{n}$ be a family of nested linear binary codes. 
	Let $k_{i}=\dim(\mathcal{C}_{i})$ and let $b_{1}, b_{2}, \dots, b_{n}$ be a basis of $\mathbb{F}_{2}^{n}$ such that $b_{1}, \dots, b_{k_{i}}$ span $\mathcal{C}_{i}.$ The lattice $\Lambda_D$ consists of all vectors of the form
	\begin{equation}
	\displaystyle\sum_{i=1}^{L} 2^{i-1} \displaystyle\sum_{j=1}^{k_{i}} \alpha_{ij} b_{j}+2^{L}z
	\end{equation}
	where $\alpha_{ij} \in \{0,1\}$ and $z \in \mathbb{Z}^{n}.$ 
\end{definition}

	Another remarkable and well studied multi-level construction, that in general does not produce a lattice constellation, even when the underlying codes are linear, is Construction C, defined below using the terminology in \cite{forney1} (more details and applications also in \cite{agrelleriksson} \cite{bollaufzamir}).
	
\begin{definition}(\textit{Construction C})  Consider $L$ binary codes $\mathcal{C}_{1}, \dots, \mathcal{C}_{L} \subseteq \mathbb{F}_{2}^{n},$ not necessarily nested or linear. Then we define an infinite constellation $\Gamma_{C}$ in $\mathbb{R}^{n}$ that is called Construction C as:
	\begin{equation} \label{eqC}
	\Gamma_{C}:=\mathcal{C}_{1}+2\mathcal{C}_{2}+ \dots + 2^{L-1}\mathcal{C}_{L}+2^{L}\mathbb{Z}^{n},
	\end{equation} 
	or equivalently
	{\small \begin{eqnarray}
	\Gamma_{C} &:=& \{c_1 + 2c_2 + \dots + 2^{L-1} c_L + 2^L z: c_i \in \mathcal{C}_i,  \nonumber \\
	& & i=1,\dots,L, \ z \in \mathbb{Z}^n\}.
	\end{eqnarray}}
\end{definition}

	Note that if $L=1$ and we consider a single level with a linear code, then both Constructions C and D reduce to lattice Construction A. There exists also a relation between Constructions C and D that will be presented in what follows.
		
\begin{definition} (\textit{Schur product}) For $x=(x_{1}, \dots, x_{n})$ and $y=(y_{1}, \dots, y_{n}) \in \mathbb{F}_{2}^{n},$ we define $x \ast y = (x_{1}y_{1}, \dots, x_{n}y_{n}).$
\end{definition}

	It is easy to verify, using the Schur product that for $x, y \in \mathbb{F}_{2}^{n}$
\begin{equation} \label{sum}
x+y=x \oplus y + 2(x \ast y).
\end{equation}

	Denote by $\Lambda_{C}$ the smallest lattice that contains $\Gamma_{C}.$ Kositwattanarerk and Oggier \cite{kositoggier} give a condition that if satisfied guarantees that Construction C will provide a lattice which coincides with Construction D.

\begin{theorem} \label{thmko} \cite{kositoggier} (\textit{Lattice condition for Constructions C and D}) Given a family of nested linear binary codes $\mathcal{C}_{1} \subseteq \dots \subseteq \mathcal{C}_{L} \subseteq \mathbb{F}_{2}^{n},$ then the following statements are equivalent:
	\begin{itemize}[\IEEEsetlabelwidth{Z}]
		\item[1.] $\Gamma_{C}$ is a lattice.
		\item[2.] $\Gamma_{C}=\Lambda_{C}.$
		\item[3.] $\mathcal{C}_{1} \subseteq \dots \subseteq \mathcal{C}_{L} \subseteq \mathbb{F}_{2}^{n}$ is closed under Schur product, i.e., given two elements $c_{i}, \tilde{c}_{i} \in \mathcal{C}_{i}, c_{i} \ast \tilde{c}_{i} \in \mathcal{C}_{i+1},$ for all $i=1, \dots, L-1.$
		\item[4.] $\Gamma_{C}=\Lambda_{D},$ 
	\end{itemize}
\end{theorem}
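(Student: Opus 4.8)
The plan is to establish the cycle $(1)\Rightarrow(3)\Rightarrow(4)\Rightarrow(1)$ together with the easy equivalence $(1)\Leftrightarrow(2)$, so that all four statements collapse to one. The equivalence $(1)\Leftrightarrow(2)$ is immediate from the definition of $\Lambda_C$ as the smallest lattice containing $\Gamma_C$: one always has $\Gamma_C\subseteq\Lambda_C$, and if $\Gamma_C$ is itself a lattice then minimality gives $\Lambda_C\subseteq\Gamma_C$ and hence equality, while conversely $\Gamma_C=\Lambda_C$ exhibits $\Gamma_C$ as a lattice. Likewise $(4)\Rightarrow(1)$ reduces to the standard fact that Construction D always produces a lattice: when two inner sums $\sum_j\alpha_{ij}b_j$ are added, the real carries are of the form $2\sum_j\alpha_{ij}\alpha'_{ij}b_j$, and after multiplication by $2^{i-1}$ they cascade into a valid level-$(i+1)$ term precisely because $b_1,\dots,b_{k_i}$ sit among the spanning vectors $b_1,\dots,b_{k_{i+1}}$ of $\mathcal{C}_{i+1}$; thus $\Lambda_D$ is a lattice and $\Gamma_C=\Lambda_D$ forces $\Gamma_C$ to be one.

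For $(1)\Rightarrow(3)$ I would exploit the group structure directly. Fix $c_i,\tilde c_i\in\mathcal{C}_i$. The elements $2^{i-1}c_i$, $2^{i-1}\tilde c_i$ lie in $\Gamma_C$ (place the codeword at level $i$, zero elsewhere), and so does $2^{i-1}(c_i\oplus\tilde c_i)$ by linearity of $\mathcal{C}_i$. The identity~\eqref{sum} gives $2^{i-1}c_i+2^{i-1}\tilde c_i=2^{i-1}(c_i\oplus\tilde c_i)+2^{i}(c_i\ast\tilde c_i)$, so since $\Gamma_C$ is a group the element $2^{i}(c_i\ast\tilde c_i)$, being a difference of two of its members, lies in $\Gamma_C$. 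Writing this element in Construction C form and reducing modulo $2,4,\dots,2^{i}$ successively annihilates the lower levels and reads off the $0/1$ vector $c_i\ast\tilde c_i$ as a level-$(i+1)$ codeword, i.e. $c_i\ast\tilde c_i\in\mathcal{C}_{i+1}$, which is condition (3).

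The core is $(3)\Rightarrow(4)$, which I would split into showing that Schur closure makes $\Gamma_C$ additively closed and then identifying the lattice with $\Lambda_D$. To add $a=\sum_i 2^{i-1}c_i+2^Lz$ and $b=\sum_i 2^{i-1}c_i'+2^Lz'$ I would process the levels from $i=1$ upward, carrying an incoming value $r_{i-1}\in\mathcal{C}_i$ (with $r_0=0$) and writing $c_i+c_i'+r_{i-1}=(c_i\oplus c_i'\oplus r_{i-1})+2r_i$. The mod-$2$ digit lies in $\mathcal{C}_i$ by linearity, and a short computation using~\eqref{sum} gives the outgoing carry $r_i=(c_i\oplus c_i')\ast r_{i-1}\oplus c_i\ast c_i'$; both summands are Schur products of elements of $\mathcal{C}_i$, so Schur closure places them in $\mathcal{C}_{i+1}$ and linearity yields $r_i\in\mathcal{C}_{i+1}$, keeping the invariant alive. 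The top carry is absorbed into $2^L\mathbb{Z}^n$, so $a+b\in\Gamma_C$; since $\Gamma_C/2^L\mathbb{Z}^n$ is then a finite additive submonoid of $(\mathbb{Z}/2^L\mathbb{Z})^n$, it is a subgroup and $\Gamma_C$ is a lattice. For the identification, $\Lambda_D\subseteq\Gamma_C$ is immediate because $\Gamma_C$ is now a group containing every generator $2^{i-1}b_j$ (as $b_j\in\mathcal{C}_i$) together with $2^L\mathbb{Z}^n$, while the reverse inclusion follows either by the same carry computation, showing each $2^{i-1}c_i\in\Lambda_D$, or more quickly by noting that $\Gamma_C$ and $\Lambda_D$ both contain $2^L\mathbb{Z}^n$ with index $2^{\sum_i k_i}$, so one inclusion of equal-index lattices already forces equality.

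I expect the carry bookkeeping to be the main obstacle. The delicate point is that the carry generated at level $i$ is not merely $c_i\ast c_i'$ but also the interaction $(c_i\oplus c_i')\ast r_{i-1}$ of the current digits with the carry arriving from below, and one must verify that every such term is a product of two elements both lying in $\mathcal{C}_i$, so that \emph{pairwise} Schur closure (rather than some higher-order closure) already suffices, and that the two contributions never overlap, so the accumulated carry stays a genuine $0/1$ vector at each coordinate. Getting the induction invariant ``incoming carry $\in\mathcal{C}_i$'' exactly right and ruling out any overflow beyond a single carry bit is where the care is concentrated; the inclusions and the steps $(1)\Leftrightarrow(2)$ and $(4)\Rightarrow(1)$ are then routine.
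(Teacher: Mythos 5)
The paper itself contains no proof of this theorem: it is quoted from Kositwattanarerk and Oggier \cite{kositoggier} as background, so your argument can only be measured against that external source and against the machinery the paper later builds for its own Theorem 2. With that caveat, your proof is correct and essentially complete. The cycle $(2)\Leftrightarrow(1)\Rightarrow(3)\Rightarrow(4)\Rightarrow(1)$ does cover all four equivalences; the coordinatewise full-adder identity $a+b+r=(a\oplus b\oplus r)+2\bigl((a\oplus b)r\oplus ab\bigr)$ underlying your carry recursion $r_i=(c_i\oplus c_i')\ast r_{i-1}\oplus(c_i\ast c_i')$ is exactly right, and it is precisely what makes \emph{pairwise} Schur closure suffice: both summands of $r_i$ are Schur products of two elements of $\mathcal{C}_i$ once the invariant $r_{i-1}\in\mathcal{C}_i$ is in place, and the top carry $r_L$ is harmlessly absorbed into $2^L\mathbb{Z}^n$. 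Your extraction argument for $(1)\Rightarrow(3)$ (subtract $2^{i-1}(c_i\oplus\tilde c_i)$ from $2^{i-1}c_i+2^{i-1}\tilde c_i$, then strip binary digits to force $d_1=\dots=d_i=0$ and read off $c_i\ast\tilde c_i=d_{i+1}\in\mathcal{C}_{i+1}$) is sound, as is the finite-submonoid-of-a-finite-group device for closure under negation. Two comparisons are worth recording. First, your carry recursion is a clean single-carry-bit specialization of the paper's Lemma 1 (the sum formula for $\Gamma_{C^\star}$ with its terms $s_i$ and $r_i^j$), and your invariant ``incoming carry lies in $\mathcal{C}_i$'' is the same mechanism the paper describes informally before proving Theorem 2; so your route is the natural restriction of the paper's inter-level machinery to the product-code case $\mathcal{C}=\mathcal{C}_1\times\dots\times\mathcal{C}_L$. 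Second, your identification $\Gamma_C=\Lambda_D$ by index counting (one inclusion via generators, plus $[\Gamma_C:2^L\mathbb{Z}^n]=[\Lambda_D:2^L\mathbb{Z}^n]=2^{\sum_i k_i}$) is slicker than a direct two-sided rewriting argument; the only detail you should spell out is that $[\Lambda_D:2^L\mathbb{Z}^n]=2^{\sum_i k_i}$ is not automatic --- a priori distinct coefficient arrays $\{\alpha_{ij}\}$ could collide modulo $2^L\mathbb{Z}^n$ --- and it is proved by the same digit-stripping argument (reduce mod $2$, use linear independence of $b_1,\dots,b_{k_1}$, divide by $2$, iterate). For $\Gamma_C$ the corresponding count is immediate from uniqueness of binary expansions.
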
 

\section{Construction $C^\star$ over binary codes}

	This section is devoted to the introduction of a new method of constructing constellations from binary codes: Construction $C^\star.$


\begin{definition} \label{constrcstar} (\textit{Construction $C^{\star}$}) Let $\mathcal{C}$ be an $nL-$dimensional code in $\mathbb{F}_{2}^{nL}.$ Then Construction $C^{\star} \in \mathbb{R}^{n}$ is defined as 
{\small \begin{eqnarray}
 \Gamma_{C^{\star}}& := & \{c_{1}+2c_{2}+ \dots + 2^{L-1}c_{L}+2^{L}z: (c_{1}, c_{2}, \dots, c_{L}) \in \mathcal{C}, \nonumber \\
& & c_{i} \in \mathbb{F}_{2}^{n}, i=1, \dots, L, z \in \mathbb{Z}^{n}\}.  
\end{eqnarray}}
\end{definition}

\begin{definition} (\textit{Projection codes}) \label{subcodes} Let $(c_1,c_2,...,c_L)$ be a partition of a codeword $c = (b_1,....,b_{nL}) \in \mathcal{C}$ into length$-n$ subvectors  $c_i = (b_{(i-1)n+1},....,b_{in}),$  $i=1,\dots,L.$ Then, a projection code $\mathcal{C}_i$ consists of all vectors $c_{i}$ that appear as we scan through all possible codewords $c \in \mathcal{C}.$ Note that if $\mathcal{C}$ is linear, every projection code $\mathcal{C}_{i}, i=1, \dots, L$ is also linear.
\end{definition}



\begin{remark} If $\mathcal{C}=\mathcal{C}_{1} \times \mathcal{C}_{2} \times \dots \times \mathcal{C}_{L}$ then Construction $C^{\star}$ coincides with Construction C, because the projection codes are independent. However, in general, the projection codes are dependent, i.e., not all combinations compose a codeword in the main code $\mathcal{C}$ so we get a subset of Construction C., i.e., $\Gamma_{C^\star} \subseteq \Gamma_{C}.$
\end{remark}

\begin{definition}(\textit{Associated Construction C}) Given a Construction $C^{\star}$ defined by a linear binary code $\mathcal{C} \subseteq \mathbb{F}_{2}^{nL},$ we call the associated Construction C the constellation defined as
\begin{equation}
\Gamma_{{C}}= \mathcal{C}_{1} + 2 \mathcal{C}_{2} + \dots + 2^{L-1}\mathcal{C}_{L}+2\mathbb{Z}^{n},
\end{equation}
such that $\mathcal{C}_{1}, \mathcal{C}_{2}, \dots, \mathcal{C}_{L} \in \mathbb{F}_{2}^{n}$ are the projection codes of $\mathcal{C}$ as in Definition \ref{subcodes}.

\end{definition}


	One can observe that the immediate advantage of working with Construction $C^{\star}$ instead of Construction C lies in the fact that a code of block length $nL$ typically has a larger minimum Hamming distance and may present a better immunity to noise than a code of block length $n.$ 
	

\begin{example} Consider a linear binary code $\mathcal{C}$ with length $nL=4,$ ($L=n=2$), where $\mathcal{C}=\{(0,0,0,0), (1,0,0,1),(1,0,1,0),$ $(0,0,1,1)\} \subseteq \mathbb{F}_{2}^{4}.$ 
	Thus, an element $w \in \Gamma_{C^{\star}}$ can be written as
\begin{equation} \label{eqconstrc}
w=c_{1}+2c_{2}+4z \ \in \Gamma_{C^\star} ,
\end{equation}	
such that $(c_{1},c_{2}) \in \mathcal{C}$ and $z \in \mathbb{Z}^{2}.$ Geometrically, the resulting constellation is given by the blue points represented in Figure \ref{consc}. Note that $\Gamma_{C^{\star}}$ is not a lattice because, for example, $(1,2), (3,0) \in \Gamma_{C^{\star}},$ but $(1,2)+(3,0)=(4,2) \notin \Gamma_{C^{\star}}.$ However, if we consider the associated Construction C with codes $\mathcal{C}_{1}=\{(0,0),(1,0)\}$ and $\mathcal{C}_{2}=\{(0,0),(1,1),(0,1),(1,0)\},$ we have a lattice (Figure \ref{consc}), because $\mathcal{C}_{1}$ and $\mathcal{C}_{2}$ satisfy the condition given by Theorem \ref{thmko}.
	

\begin{figure}[H]
\begin{center}
		\includegraphics[height=5cm]{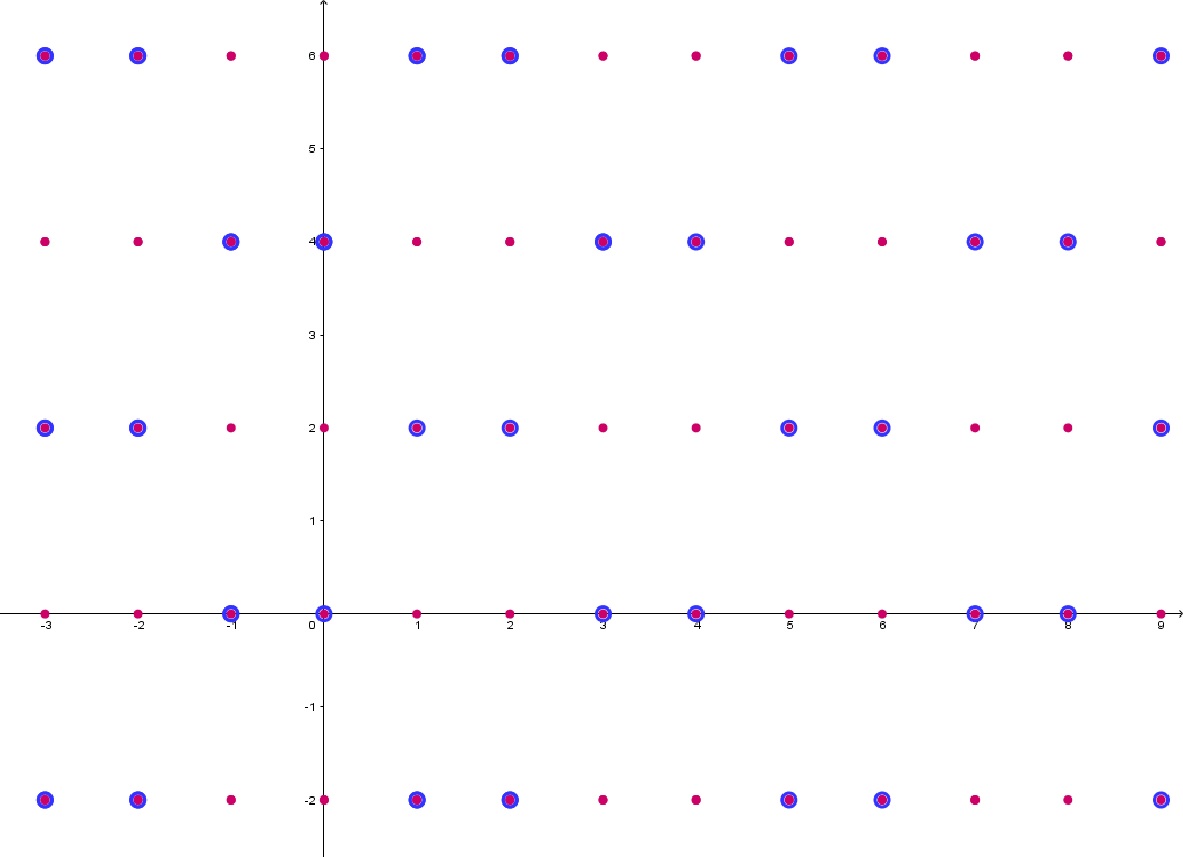}  
\caption{{(Nonlattice) Construction $C^{\star}$ constellation in blue and its associated (lattice) Construction C constellation in pink}}
 \label{consc}
\end{center}
\end{figure} 	
\end{example}

	The next example presents a case where both Constructions $C^\star$ and $C$ are lattices, but they are not equal.
	
\begin{example} \label{ceconjecture} Let a linear binary code $\mathcal{C}=\{(0,0,0,0),$ $(0,0,1,0),(1,0,0,1),(1,0,1,1)\} \subseteq \mathbb{F}_{2}^{4}$ ($nL=4,$ $L=n=2$), so the projection codes are $\mathcal{C}_{1}=\{(0,0),(1,0)\}$ and $\mathcal{C}_{2}=\{(0,0),(1,0),$ $(0,1),(1,1)\}.$ An element $w \in \Gamma_{C^{\star}}$ can be described as 
\begin{equation}
w=\begin{cases}
(0,0)+4z, & \mbox{if} \ {c}_{1}=(0,0) \ \mbox{and} \ {c}_{2}=(0,0) \\
(1,2)+4z, & \mbox{if} \ {c}_{1}=(1,0) \ \mbox{and} \ {c}_{2}=(0,1) \\
(2,0)+4z, & \mbox{if} \ {c}_{1}=(0,0) \ \mbox{and} \ {c}_{2}=(1,0) \\
(3,2)+4z, & \mbox{if} \ {c}_{1}=(1,0) \ \mbox{and} \ {c}_{2}=(1,1), \\
\end{cases}
\end{equation}
$z \in \mathbb{Z}^{2}.$ This construction is represented by black points in Figure \ref{fig10}. Note that $\Gamma_{C^{\star}}$ is a lattice and $\mathcal{C} \neq \mathcal{C}_{1} \times \mathcal{C}_{2},$ what implies that $\Gamma_{C^{\star}} \varsubsetneq \Gamma_{C}.$ However, the associated Construction C is also a lattice (Figure \ref{fig10}).

\begin{figure}[H]
\begin{center}
		\includegraphics[height=5cm]{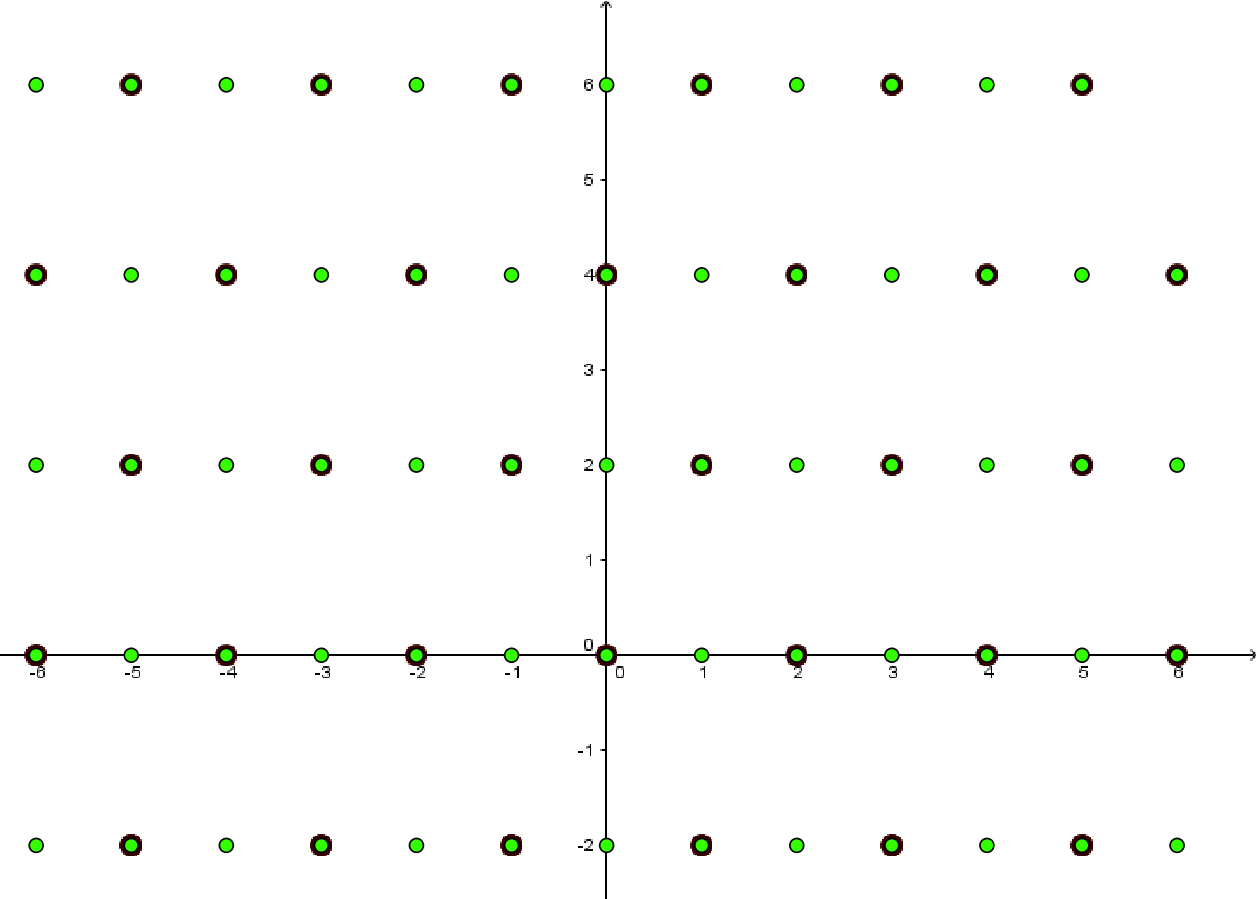}  
\caption{{(Lattice) Construction $C^{\star}$ constellation in black and its associated (lattice) Construction C constellation in green}}
 \label{fig10}
\end{center}
\end{figure} 	

\noindent To appreciate the advantage of $\Gamma_{C^\star}$ over the associated $\Gamma_{{C}},$ one can notice that the packing densities are, respectively $\Delta_{\Gamma_{C^\star}} = \frac{\Pi}{4} \approx 0.7853$ and  $\Delta_{\Gamma_{{C}}} = \frac{\Pi}{8} \approx 0.3926.$ Therefore, in this example, $\Gamma_{C^\star}$ has a higher packing density than  $\Gamma_{{C}}.$ 
\end{example}

	We can also describe the densest lattice in dimension $24$, the Leech lattice $\Lambda_{24},$ in terms of Construction $C^{\star}$ constellation with $L=3$ levels. 

\begin{example} \label{exleech} Based on the construction given by Conway and Sloane \cite{conwaysloane} (pp. 131-132) and Amrani et al \cite{amrani}, we start by considering three special linear binary codes
\vspace{0.1cm}
\begin{itemize}
\item $\mathcal{C}_{1}=\{(0,\dots, 0), (1, \dots, 1)\} \subseteq \mathbb{F}_{2}^{24};$
\vspace{0.2cm}
\item $\mathcal{C}_{2}$ as a Golay code $\mathcal{C}_{24} \subset \mathbb{F}_{2}^{24}$ achieved by adding a parity bit to the original $[23,12,7]-$binary Golay code $\mathcal{C}_{23},$ which consists in a quadratic residue code of length 23;
\vspace{0.2cm}
\item $\mathcal{C}_{3} = \tilde{\mathcal{C}}_{3} \cup \overline{\mathcal{C}}_{3} = \mathbb{F}_{2}^{24},$ where $\tilde{\mathcal{C}}_{3}=\{(x_{1}, \dots, x_{24}) \in \mathbb{F}_{2}^{24}:  \sum_{i=1}^{24} x_{1} \equiv 0 \mod 2\}$ and $\overline{\mathcal{C}}_{3}=\{(y_{1}, \dots, y_{24}) \in \mathbb{F}_{2}^{24}: \sum_{i=1}^{24} y_{1} \equiv 1 \mod 2\}.$
\end{itemize}

	Observe that $\mathcal{C}_{1}, \mathcal{C}_{2}$ and $\mathcal{C}_{3}$ are linear codes. Consider a code $\mathcal{C} \subseteq \mathbb{F}_{2}^{72}$ whose codewords are described in one of two possible ways:
\begin{eqnarray}
\mathcal{C}=\{ (0,\dots, 0, \underbrace{a_{1}, \dots, a_{24}}_{\in \mathcal{C}_{24}}, \underbrace{x_{1}, \dots, x_{24}}_{\in \tilde{\mathcal{C}}_{3}}), \nonumber \\
(1, \dots, 1, \underbrace{a_{1}, \dots, a_{24}}_{\in \mathcal{C}_{24}}, \underbrace{y_{1}, \dots, y_{24}}_{\in \overline{\mathcal{C}}_{3}})\}.
\end{eqnarray}

	Thus, we can define the Leech lattice $\Lambda_{24}$ as a $3-$level Construction $C^{\star}$ given by
{\small \begin{equation}
\Lambda_{24}=\Gamma_{C^{\star}}=\{c_{1}+2c_{2}+4c_{3}+8z: (c_{1}, c_{2}, c_{3}) \in \mathcal{C}, z \in \mathbb{Z}^{24}\}.
\end{equation}}
Observe that in this case $\Gamma_{C^{\star}} \neq \Gamma_{C}.$

In this case, the associated Construction C has packing density $\Delta_{\Gamma_{C}} \approx 0.00012 < 0.001929 \approx \Delta_{\Gamma_{C^\star}},$ which is the packing density of $\Lambda_{24},$ the best known packing density in dimension $24$ \cite{conwaysloane}.
	
\end{example}

\section{Conditions for latticeness of Construction $\mathcal{C}^\star$}


	In general, it is possible to have a lattice $\Gamma_{C^{\star}},$ with $\Gamma_{C^\star} \varsubsetneq \Gamma_{C},$ as can be observed in Example \ref{ceconjecture}. This fact motivated our search for a condition for a lattice Construction $C^{\star}.$ In the upcoming discussion, we will exhibit some definitions and present a condition for $\Gamma_{C^{\star}}$ to be a lattice.
	
		
\begin{definition} (\textit{Antiprojection}) The antiprojection (inverse image of a projection) $\mathcal
{S}_{i}(c_1,\dots, c_{i-1},c_{i+1}, \dots, c_{L})$ consists of all vectors $c_{i} \in \mathcal{C}_{i}$ that appear as we scan through all possible codewords $c \in \mathcal{C},$ while keeping $c_{1}, \dots, c_{i-1}, c_{i+1}, \dots, c_{L}$ fixed:
\begin{eqnarray}
&\mathcal{S}_{i}(c_1,...,c_{i-1}, c_{i+1},...,c_{L})= \nonumber \\
&\{c_{i} \in \mathcal{C}_{i}: (c_{1}, \dots, \underbrace{c_{i}}_{\text{\makebox[0pt]{i-th posititon} }}, \dots, c_{L}) \in \mathcal{C}\}.
\end{eqnarray}
\end{definition}

%





	The main contribution of this paper is the following:

\begin{theorem} (\textit{Lattice conditions for $\Gamma_{C^\star}$}) \label{thm} Let $\mathcal{C} \subseteq \mathbb{F}_{2}^{nL}$ be a linear binary code with projection codes $\mathcal{C}_{1},\mathcal{C}_{2}, \dots, \mathcal{C}_{L}$ such that $\mathcal{C}_{1} \subseteq \mathcal{S}_{2}(0,\dots,0) \subseteq \dots \subseteq \mathcal{C}_{L-1} \subseteq \mathcal{S}_{L}(0, \dots,0) \subseteq \mathcal{C}_{L} \subseteq \mathbb{F}_{2}^{n}.$ Then the constellation given by $\Gamma_{C^{\star}}$ represents a lattice if and only if $\mathcal{S}_{i}(0, \dots, 0)$ closes $\mathcal{C}_{i-1}$ under Schur product for all levels $i=2, \dots, L.$  
\end{theorem}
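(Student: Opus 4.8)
The plan is to characterize latticeness as closure under addition and then translate closure into a level-by-level condition on $\mathcal{C}$ that can be matched against the Schur hypotheses. First I would note that $0\in\Gamma_{C^{\star}}$ and $2^{L}\mathbb{Z}^{n}\subseteq\Gamma_{C^{\star}}$ (take all $c_{i}=0$), so $\Gamma_{C^{\star}}$ is the union of the finitely many cosets of $2^{L}\mathbb{Z}^{n}$ indexed by $M:=\{c_{1}+2c_{2}+\dots+2^{L-1}c_{L}\bmod 2^{L}:(c_{1},\dots,c_{L})\in\mathcal{C}\}\subseteq(\mathbb{Z}/2^{L}\mathbb{Z})^{n}$. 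As $M$ is a finite set containing $0$, it is a subgroup exactly when it is closed under addition, and this is equivalent to $\Gamma_{C^{\star}}$ being a lattice. To analyse closure I would add two points $w,w'$ coming from $c,c'\in\mathcal{C}$ and carry out the binary addition: with $\gamma_{0}=0$ and, coordinatewise, $c_{i}+c_{i}'+\gamma_{i-1}=d_{i}+2\gamma_{i}$ where $d_{i},\gamma_{i}\in\{0,1\}^{n}$, one gets $d_{i}=c_{i}\oplus c_{i}'\oplus\gamma_{i-1}$ and $\gamma_{i}=(c_{i}\ast c_{i}')\oplus\bigl((c_{i}\oplus c_{i}')\ast\gamma_{i-1}\bigr)$. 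Since $(d_{1},\dots,d_{L})=(c\oplus c')\oplus(0,\gamma_{1},\dots,\gamma_{L-1})$ and $c\oplus c'\in\mathcal{C}$ by linearity, the point $w+w'$ lies in $\Gamma_{C^{\star}}$ if and only if the carry word $(0,\gamma_{1},\dots,\gamma_{L-1})$ belongs to $\mathcal{C}$. Everything therefore reduces to deciding, for all pairs $c,c'$, when this carry word is a codeword.

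For the ``if'' direction I would prove by induction on $i$ that the Schur hypotheses force $\gamma_{i}\in\mathcal{S}_{i+1}(0,\dots,0)$ for every pair. The base case $\gamma_{1}=c_{1}\ast c_{1}'\in\mathcal{S}_{2}(0,\dots,0)$ is exactly the hypothesis at level $2$. For the step, $c_{i}\ast c_{i}'\in\mathcal{S}_{i+1}$ by the level-$(i+1)$ hypothesis, while $\gamma_{i-1}\in\mathcal{S}_{i}\subseteq\mathcal{C}_{i}$ and $c_{i}\oplus c_{i}'\in\mathcal{C}_{i}$ give $(c_{i}\oplus c_{i}')\ast\gamma_{i-1}\in\mathcal{S}_{i+1}$, again by the level-$(i+1)$ hypothesis; adding these and using linearity of $\mathcal{S}_{i+1}$ yields $\gamma_{i}\in\mathcal{S}_{i+1}$. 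Because $\gamma_{i}\in\mathcal{S}_{i+1}(0,\dots,0)$ means precisely that the word with $\gamma_{i}$ in block $i+1$ and zeros elsewhere lies in $\mathcal{C}$, summing these single-level codewords over $i$ produces $(0,\gamma_{1},\dots,\gamma_{L-1})\in\mathcal{C}$, so $w+w'\in\Gamma_{C^{\star}}$ and the construction is a lattice.

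For the ``only if'' direction I would argue by descending through the levels, now using the genuine content of the chain, namely $\mathcal{C}_{i-1}\subseteq\mathcal{S}_{i}$, which supplies single-level codewords: for each $v\in\mathcal{C}_{i-1}$ the word carrying $v$ in block $i$ is in $\mathcal{C}$. Given codewords realizing a prescribed pair $c_{i-1},\tilde{c}_{i-1}\in\mathcal{C}_{i-1}$ in block $i-1$, closure forces $(0,\gamma_{1},\dots,\gamma_{L-1})\in\mathcal{C}$. I would invoke the already established higher-level conditions, through the induction lemma of the previous paragraph, to place the higher carries $\gamma_{i},\dots,\gamma_{L-1}$ in the corresponding $\mathcal{S}_{j+1}$, subtract them off as single-level codewords, and thereby isolate block $i$; the surviving membership should read $c_{i-1}\ast\tilde{c}_{i-1}\in\mathcal{S}_{i}$, which is the level-$i$ Schur condition. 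Running this from $i=L$ downward would recover all of them.

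The hard part will be the carry bookkeeping in this second direction, and it is exactly here that the chain must earn its keep. The obstruction is that a general element of $\mathcal{C}_{i-1}$ need not be realizable by a codeword that vanishes on all lower blocks, so the incoming carry $\gamma_{i-2}$ may be nonzero and contaminate $\gamma_{i-1}$ through the term $(c_{i-1}\oplus\tilde{c}_{i-1})\ast\gamma_{i-2}$, while at the same time the lower blocks generate carries $\gamma_{1},\dots,\gamma_{i-2}$ that must be cleared before block $i$ can be read off. Clearing the higher carries needs the higher conditions and clearing the lower carries needs the lower ones, so a purely one-directional induction is circular. I expect to break this by passing to the minimal level at which a Schur condition fails and exhibiting, with the single-level codewords guaranteed by the chain, realizations whose lower blocks contribute no carry, so that $\gamma_{i-1}=c_{i-1}\ast\tilde{c}_{i-1}$ is clean and produces two lattice points whose sum escapes $\Gamma_{C^{\star}}$. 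Establishing that such carry-free realizations always exist under the chain hypothesis is the crux of the whole argument.
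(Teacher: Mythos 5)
Your ``if'' direction is complete and is essentially the paper's own argument: your carry recursion $\gamma_{i}=(c_{i}\ast c_{i}')\oplus\bigl((c_{i}\oplus c_{i}')\ast\gamma_{i-1}\bigr)$ is a cleaner form of the paper's Lemma 1, and your induction placing each carry in $\mathcal{S}_{i+1}(0,\dots,0)$, followed by writing $(0,\gamma_{1},\dots,\gamma_{L-1})$ as a sum of single-block codewords, is exactly how the paper proves sufficiency. The genuine gap is precisely where you locate it: the ``only if'' direction. Your plan requires, for arbitrary $u,v\in\mathcal{C}_{i-1}$, realizations $c,\tilde{c}\in\mathcal{C}$ with $c_{i-1}=u$, $\tilde{c}_{i-1}=v$ whose lower blocks generate no carry, and you concede you cannot establish that such carry-free realizations exist. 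Be aware that the paper does not establish this either: its converse proof simply asserts that ``due to the nesting \dots there exist codewords whose particular Schur products $c_{i}\ast\tilde{c}_{i}=0$,'' which is exactly your unproven claim, stated without justification. Moreover, even where lower carries are absent (e.g.\ at the lowest level, or at your minimal failing level), membership of the full carry word in $\mathcal{C}$ does not let you conclude $u\ast v\in\mathcal{S}_{i}(0,\dots,0)$, because the higher carries $\gamma_{i},\dots,\gamma_{L-1}$ cannot be stripped off without the very conditions you are trying to prove.

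This gap is not repairable, because the claim you need (and the theorem's ``only if'' direction itself, under the stated nesting hypothesis) is false. Take $L=3$, $n=8$, and in $\mathbb{F}_{2}^{8}$ put $u=11000000$, $v=01100000$, $w=u\ast v=01000000$, $p=00001100$, $q=00000110$, $T=\langle u,v,w,p,q\rangle$, and let $\mathcal{C}\subseteq\mathbb{F}_{2}^{24}$ be spanned by
\begin{equation*}
(u,p,0),\ (v,q,0),\ (0,u,0),\ (0,v,0),\ (0,w,p\ast q),\ \{(0,0,y):y\in T\}.
\end{equation*}
Then $\mathcal{C}_{1}=\mathcal{S}_{2}(0,\dots,0)=\langle u,v\rangle$ and $\mathcal{C}_{2}=\mathcal{S}_{3}(0,\dots,0)=T$, so the nesting hypothesis holds with equality. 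For any two codewords, with block-one coefficients $(a_{1},a_{2})$ and $(\tilde{a}_{1},\tilde{a}_{2})$, a direct computation gives $\gamma_{1}=a_{1}\tilde{a}_{1}u\oplus a_{2}\tilde{a}_{2}v\oplus\epsilon w$ and $\gamma_{2}=\epsilon(p\ast q)\oplus t$ with $t\in T$, where $\epsilon=a_{1}\tilde{a}_{2}\oplus a_{2}\tilde{a}_{1}$ appears in both; since the words $(0,\beta_{1}u\oplus\beta_{2}v\oplus\delta w,\delta(p\ast q)\oplus y)$, $y\in T$, are precisely the codewords vanishing on block one, every carry word lies in $\mathcal{C}$, hence $\Gamma_{C^{\star}}$ is a lattice. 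Yet $u\ast v=w\notin\mathcal{S}_{2}(0,\dots,0)$ and $p\ast q\notin\mathcal{S}_{3}(0,\dots,0)$, so both Schur conditions fail. The mechanism is exactly the circularity you diagnosed: $p$ and $q$ are realizable only above the block-one pair $u,v$, whose product is nonzero, and the two ``illegal'' carries $w$ and $p\ast q$ cover for each other across levels. So your reduction of latticeness to the carry-word condition $(0,\gamma_{1},\dots,\gamma_{L-1})\in\mathcal{C}$ is correct and is the right statement to aim for, but neither your outline nor the paper's proof can convert it into the stated Schur-product criterion, and the example above shows why no argument can.
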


	The proof of Theorem \ref{thm} is given below, after a few motivational examples and related results.

	While $\mathcal{S}_{i}(0,\dots,0) \subseteq \mathcal{C}_{i}$ by construction, note that the assumption that $\mathcal{C}_{i} \subseteq \mathcal{S}_{i+1}(0,\dots,0),$ for $i=2, \dots, L,$ is not always satisfied by a general Construction $C^{\star},$ sometimes even if this Construction $C^\star$ is a lattice; see Example \ref{exnonesting}.

	Observe that when $\Gamma_{C^\star}=\Gamma_{C},$ i.e., when $\mathcal{C}=\mathcal{C}_{1} \times \mathcal{C}_{2} \times \dots \times \mathcal{C}_{L},$ we have that $\mathcal{S}_{i}(0,\dots, 0) = \mathcal{C}_{i}, i=1, \dots, L$ and our condition will coincide with the one presented in Theorem \ref{thmko}. Besides, if $\Gamma_{C^\star} \varsubsetneq \Gamma_{C}$ we also have that:

\begin{corollary}(\textit{Latticeness of associated Construction C}) Let $\mathcal{C} \subseteq \mathbb{F}_{2}^{nL}.$ If $\mathcal{C}_{1} \subseteq \mathcal{S}_{2}(0,\dots,0) \subseteq \dots \subseteq \mathcal{C}_{L-1} \subseteq \mathcal{S}_{L}(0, \dots,0) \subseteq \mathcal{C}_{L}$ and the constellation $\Gamma_{C^\star}$ is a lattice then also the associated Construction C is a lattice.
\end{corollary}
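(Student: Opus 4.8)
The plan is to derive the corollary as a short consequence of the main theorem (Theorem \ref{thm}) combined with the Kositwattanarerk--Oggier criterion (Theorem \ref{thmko}). Since $\Gamma_{C^\star}$ and its associated $\Gamma_C$ share the very same projection codes $\mathcal{C}_1,\dots,\mathcal{C}_L$, the whole strategy is to show that the Schur-closure condition guaranteeing that $\Gamma_{C^\star}$ is a lattice is in fact \emph{stronger} than the condition guaranteeing that $\Gamma_C$ is a lattice, so that the former forces the latter. Throughout I assume, as in Theorem \ref{thm}, that $\mathcal{C}$ is linear, whence each projection code $\mathcal{C}_i$ is linear by Definition \ref{subcodes}.

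First I would invoke Theorem \ref{thm} in its ``only if'' direction: under the standing hypothesis that the chain $\mathcal{C}_{1} \subseteq \mathcal{S}_{2}(0,\dots,0) \subseteq \dots \subseteq \mathcal{C}_{L-1} \subseteq \mathcal{S}_{L}(0,\dots,0) \subseteq \mathcal{C}_{L}$ holds, the assumption that $\Gamma_{C^\star}$ is a lattice yields that $\mathcal{S}_i(0,\dots,0)$ closes $\mathcal{C}_{i-1}$ under Schur product for every $i=2,\dots,L$; that is, $c_{i-1} \ast \tilde{c}_{i-1} \in \mathcal{S}_i(0,\dots,0)$ for all $c_{i-1}, \tilde{c}_{i-1} \in \mathcal{C}_{i-1}$.

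Next I would use the elementary inclusion $\mathcal{S}_i(0,\dots,0) \subseteq \mathcal{C}_i$, which holds by definition of the antiprojection. Combined with the previous step this gives $c_{i-1} \ast \tilde{c}_{i-1} \in \mathcal{C}_i$ for all $c_{i-1}, \tilde{c}_{i-1} \in \mathcal{C}_{i-1}$ and all $i=2,\dots,L$; after the index shift $j=i-1$ this is precisely the statement that the projection codes are closed under Schur product in the sense of condition 3 of Theorem \ref{thmko}. Moreover, the chain forces the projection codes to be nested, since $\mathcal{C}_{i-1}\subseteq \mathcal{S}_i(0,\dots,0)\subseteq \mathcal{C}_i$ collapses to $\mathcal{C}_1 \subseteq \mathcal{C}_2 \subseteq \dots \subseteq \mathcal{C}_L$.

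Finally, having exhibited a family of nested linear codes $\mathcal{C}_1 \subseteq \dots \subseteq \mathcal{C}_L$ closed under Schur product, I would apply the implication $(3 \Rightarrow 1)$ of Theorem \ref{thmko} to conclude that the associated constellation $\Gamma_C = \mathcal{C}_1 + 2\mathcal{C}_2 + \dots + 2^{L-1}\mathcal{C}_L + 2^{L}\mathbb{Z}^n$ is a lattice. I do not anticipate a genuine obstacle in this argument, as it is essentially a transfer between two lattice criteria; the only points demanding care are the inclusion $\mathcal{S}_i(0,\dots,0) \subseteq \mathcal{C}_i$ and the correct index bookkeeping between the two Schur-closure conditions, which together make precise the intuition that latticeness of $\Gamma_{C^\star}$ imposes on the projection codes a requirement at least as strong as the one Theorem \ref{thmko} demands for $\Gamma_C$.
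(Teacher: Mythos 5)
Your proposal is correct and follows essentially the same route as the paper: extract Schur-closure from the converse direction of Theorem \ref{thm}, combine it with the nesting chain to get nested projection codes closed under Schur product, and conclude latticeness of $\Gamma_C$ via the Kositwattanarerk--Oggier criterion. You are in fact slightly more careful than the paper's own terse proof, making explicit the inclusion $\mathcal{S}_i(0,\dots,0) \subseteq \mathcal{C}_i$ and the index bookkeeping, where the paper simply asserts that $\mathcal{S}_i(0,\dots,0) = \mathcal{C}_i$ for the associated Construction C.
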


\begin{proof} If $\Gamma_{C^\star}$ is a lattice, conditions presented in Theorem \ref{thm} holds. 
For associated Construction C, $\mathcal{S}_{i}(0,\dots,0) = \mathcal{C}_{i},$ for all $i=1,\dots,L.$ Thus, it follows that $\mathcal{C}_{1} \subseteq \mathcal{C}_{2} \subseteq \dots \subseteq \mathcal{C}_{L}$ is closed under Schur product and $\Gamma_{{C}}$ is a lattice.
\end{proof}

	Before presenting the proof of Theorem \ref{thm}, we will see that the Leech lattice construction described in Example \ref{exleech} satisfies its condition.

\begin{example} 
We want to examine whether the proposed codes $\mathcal{C}_{1}, \mathcal{C}_{2}$ and $\mathcal{C}_{3}$ in Example \ref{exleech} satisfy the conditions stated by Theorem \ref{thm}.

	Observe that for these codes $\mathcal{S}_{2}(0,\dots,0)= \mathcal{C}_{2}$ and  $\mathcal{S}_{3}(0,\dots,0)= \tilde{\mathcal{C}}_{3}=\{(x_{1}, \dots, x_{24}) \in \mathbb{F}_{2}^{24}:  \sum_{i=1}^{24} x_{1} \equiv 0 \mod 2\}.$ Hence we need to verify that $\mathcal{C}_{1} \subseteq \mathcal{S}_{2}(0,\dots,0) \subseteq \mathcal{C}_{2}  \subseteq  \mathcal{S}_{3}(0,\dots,0) \subseteq \mathcal{C}_{3}$ and that $\mathcal{S}_{i}(0,\dots,0)$ closes $\mathcal{C}_{i-1}$ under Schur product for $i=2,3.$ 
	
	Indeed $\mathcal{C}_{1} \subseteq \mathcal{S}_{2}(0,\dots,0) = \mathcal{C}_{2},$ since $(0,\dots, 0) \in \mathcal{C}_{2}$ and if we consider the parity check matrix $H \in \mathbb{F}_{2}^{12 \times 24}$ of the $[24,12,8]-$Golay code
\begin{equation}
H=\begin{pmatrix}
B_{12 \times 12} & \mid & I_{12 \times 12}
\end{pmatrix},
\end{equation}
where 
\begin{equation}
B_{12 \times 12}=\left(
\begin{array}{cccccccccccc}
 1 & 1 & 0 & 1 & 1 & 1 & 0 & 0 & 0 & 1 & 0 & 1  \\
 1 & 0 & 1 & 1 & 1 & 0 & 0 & 0 & 1 & 0 & 1 & 1  \\
 0 & 1 & 1 & 1 & 0 & 0 & 0 & 1 & 0 & 1 & 1 & 1 \\
 1 & 1 & 1 & 0 & 0 & 0 & 1 & 0 & 1 & 1 & 0 & 1  \\
 1 & 1 & 0 & 0 & 0 & 1 & 0 & 1 & 1 & 0 & 1 & 1  \\
 1 & 0 & 0 & 0 & 1 & 0 & 1 & 1 & 0 & 1 & 1 & 1  \\
 0 & 0 & 0 & 1 & 0 & 1 & 1 & 0 & 1 & 1 & 1 & 1  \\
 0 & 0 & 1 & 0 & 1 & 1 & 0 & 1 & 1 & 1 & 0 & 1  \\
 0 & 1 & 0 & 1 & 1 & 0 & 1 & 1 & 1 & 0 & 0 & 1  \\
 1 & 0 & 1 & 1 & 0 & 1 & 1 & 1 & 0 & 0 & 0 & 1  \\
 0 & 1 & 1 & 0 & 1 & 1 & 1 & 0 & 0 & 0 & 1 & 1 \\
 1 & 1 & 1 & 1 & 1 & 1 & 1 & 1 & 1 & 1 & 1 & 0  \\
\end{array}
\right)
\end{equation}
it is easy to check that $H \cdot (1, \dots, 1)^{T} = 0 \in \mathbb{F}_{2}^{12},$ so $(1, \dots, 1) \in \mathcal{C}_{2}$ which implies that $\mathcal{C}_{1} \subseteq \mathcal{S}_{2}(0,\dots,0).$ 

	Moreover, an element $c_2 \in \mathcal{C}_{2}$ can be written as $c_2=G.h,$ where $G=\begin{pmatrix}
I_{12 \times 12} \\
\hline 
B_{12 \times 12} 
\end{pmatrix}$ is the generator matrix of the Golay code and $h = (h_1, h_2, h_3, h_4, h_5, h_6, h_7, h_8 , h_9, h_{10}, h_{11}, h_{12})^{T} \in \mathbb{F}_{2}^{12}.$ Thus, when we sum all the coordinates of the resulting vector $c_{2} = G.h$ we have $8 h_1 +8 h_{2} + 8 h_{3} + 8 h_{4} + 8h_{5} + 8h_{6} + 8 h_{7} + 8 h_{8} + 8 h_{9}+8h_{10} +8h_{11} + 12h_{12} \equiv 0 \mod 2 \Rightarrow c_{2} \in \tilde{\mathcal{C}}_{3} = \mathcal{S}_{3}(0,\dots,0).$ Hence,
\begin{equation}
\mathcal{C}_{1} \subseteq \mathcal{S}_{2}(0,\dots,0) \subseteq \mathcal{C}_{2}  \subseteq  \mathcal{S}_{3}(0,\dots,0) \subseteq \mathcal{C}_{3}.
\end{equation}

	We still need to prove that 
\begin{itemize}
\item $\mathcal{S}_{2}(0,\dots,0)$ closes $\mathcal{C}_{1}$ under Schur product and this is clearly true because the Schur product of any elements in $\mathcal{C}_{1}$ belong to $\mathcal{S}_{2}(0,\dots,0).$

\item $\mathcal{S}_{3}(0,\dots,0)$ closes $\mathcal{C}_{2}$ under Schur product: if we consider $c_{2}= G.h \in \mathcal{C}_{2}$ and $\tilde{c}_{2}=G.\tilde{h} \in \mathcal{C}_{2},$ we have checked computationally that the sum of all coordinates of the Schur product $c_{2} \ast \tilde{c}_{2} \equiv 0 \ \mod 2$ $\Rightarrow c_{2} \ast \tilde{c}_{2} \in \mathcal{S}_{3}(0,\dots,0)= \tilde{C}_{3}.$ 

\end{itemize}

\end{example}

	We can have a lattice Construction $C^\star$ even when the nesting condition in Theorem \ref{thm} is not satisfied.

\begin{example} \label{exnonesting} Consider the linear binary code $\mathcal{C}=\{(0,0,0,0,0,0),(1,0,1,1,0,1),(0,0,1,0,1,1),(1,0,0,1,1,0),$ $ (0,0,0,0,1,0),(0,0,1,0,0,1),(1,0,0,1,0,0),(1,0,1,1,1,1)\}$ $\subseteq \mathbb{F}_{2}^{6}$ with $L=3, n=2.$ Observe that $\mathcal{C}_{1} \nsubseteq \mathcal{S}_{2}(0,0,0,0)$ and 
\begin{eqnarray}
\Gamma_{C^{\star}}=\{(0,0)+8z, (1,2)+8z, (2,4)+8z, (3,6)+8z, \nonumber \\
(4,0)+8z, (5,2)+8z, (6,4)+8z, (7,6)+8z\}
\end{eqnarray}
with $z \in \mathbb{Z}^{2},$ is a lattice.

\end{example} 

	To prove Theorem \ref{thm} we need to introduce the following auxiliary result:

\begin{lemma} (\textit{Sum in $\Gamma_{C^{\star}}$}) \label{lemmasum} Let $\mathcal{C} \subseteq \mathbb{F}_{2}^{nL}$ be a linear binary code. If $x,y \in \Gamma_{C^{\star}}$ are such that
\begin{eqnarray}
x&=&c_{1}+2c_{2}+\dots+2^{L-1}c_{L}+2^{L}z \label{eqx} \\
y&=&\tilde{c}_{1}+2\tilde{c}_{2}+\dots+2^{L-1}\tilde{c}_{L}+2^{L}\tilde{z}, \label{eqy}
\end{eqnarray}
with $(c_{1}, c_{2}, \dots, c_{L}), (\tilde{c}_{1}, \tilde{c}_{2}, \dots, \tilde{c}_{L}) \in \mathcal{C}$ and $z, \tilde{z} \in \mathbb{Z}^{n},$ then
{\small
\begin{eqnarray} \label{formulasum}
x+y &= &c_{1}\oplus \tilde{c}_{1} + 2(s_{1} \oplus (c_{2}\oplus \tilde{c}_{2})) + \dots + \nonumber \\
& & +2^{L-1}(s_{L-1} \oplus (c_{L}\oplus \tilde{c}_{L}))+ 2^{L}(s_{L}+z+\tilde{z}),
\end{eqnarray}}
where 
{\small \begin{eqnarray} \label{si}
& s_{i}= (c_{i} \ast \tilde{c}_{i}) \oplus r_{i}^{1} \oplus r_{i}^{2} \oplus \dots \oplus r_{i}^{i-1} = (c_{i} \ast \tilde{c}_{i}) \bigoplus\limits_{j=1}^{i-1} r_{i}^{j}, \nonumber \\
& r_{i}^{1}=(c_{i} \oplus \tilde{c}_{i}) \ast (c_{i-1} \ast \tilde{c}_{i-1}), \ \ r_{i}^{j}=r_{i}^{j-1} \ast r_{i-1}^{j-1}, \nonumber \\
& 2 \leq j \leq L-1, i=1, \dots, L. 
\end{eqnarray}}
\end{lemma}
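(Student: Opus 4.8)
The plan is to treat $x+y$ exactly as a schoolbook binary addition, processing the levels $i=1,\dots,L$ from least to most significant while carrying a binary vector $s_{i-1}$ into level $i$ (with the convention $s_0=0$). The only algebraic tool required is the carry identity~\eqref{sum}, $a+b=(a\oplus b)+2(a\ast b)$, which resolves the real sum of two binary vectors into a digit (their XOR) plus twice a carry (their Schur product). Accordingly I would argue by induction on the number of levels, the inductive hypothesis being that after processing levels $1,\dots,i$ the partial sum carries the digits $c_1\oplus\tilde c_1$ and $s_{k-1}\oplus(c_k\oplus\tilde c_k)$ for $k\le i$ asserted in~\eqref{formulasum}, together with an outgoing carry $s_i$ that is again a $\{0,1\}$ vector.

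For the inductive step I would collect, at level $i$, the three binary contributions landing on the coefficient of $2^{i-1}$: the two code digits $c_i,\tilde c_i$ and the incoming carry $s_{i-1}$. Applying~\eqref{sum} to $c_i+\tilde c_i$ and then absorbing $s_{i-1}$ gives $c_i+\tilde c_i+s_{i-1}=\bigl(s_{i-1}\oplus(c_i\oplus\tilde c_i)\bigr)+2\bigl[(c_i\ast\tilde c_i)\oplus((c_i\oplus\tilde c_i)\ast s_{i-1})\bigr]$. The first summand is precisely the level-$i$ digit of~\eqref{formulasum}, while the doubled bracket defines the carry handed to level $i+1$, i.e. the one-step recursion $s_i=(c_i\ast\tilde c_i)\oplus\bigl((c_i\oplus\tilde c_i)\ast s_{i-1}\bigr)$ with $s_0=0$. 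Iterating this from $i=1$ up to $i=L$, and adding the final carry $s_L$ to the unprocessed integer term $z+\tilde z$ at weight $2^L$, yields exactly~\eqref{formulasum}.

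The heart of the argument — and the step I expect to be the main obstacle — is showing that each carry $s_i$ is genuinely a $\{0,1\}$ vector and not merely a nonnegative integer vector, for only then may~\eqref{sum} be reapplied at the next level and only then do the $\oplus$ operations in~\eqref{formulasum} make sense. This rests on a disjoint-support observation: in any coordinate where $c_i\ast\tilde c_i=1$ one has $c_i=\tilde c_i=1$, hence $c_i\oplus\tilde c_i=0$, so the two terms $(c_i\ast\tilde c_i)$ and $(c_i\oplus\tilde c_i)\ast s_{i-1}$ never share a support coordinate. Their real sum therefore coincides with their XOR, so $s_i$ has entries in $\{0,1\}$; a short induction (using that $s_{i-1}$ is binary) closes this, and the same disjointness is what legitimizes replacing $+$ by $\oplus$ inside every carry.

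It remains to verify that the compact recursion $s_i=(c_i\ast\tilde c_i)\oplus\bigl((c_i\oplus\tilde c_i)\ast s_{i-1}\bigr)$ unrolls into the explicit expression~\eqref{si}. For this I would substitute the recursion for $s_{i-1}$ repeatedly inside $(c_i\oplus\tilde c_i)\ast s_{i-1}$ and use distributivity of the Schur product over $\oplus$, namely $a\ast(b\oplus c)=(a\ast b)\oplus(a\ast c)$, to peel off the contributions $r_i^{1},r_i^{2},\dots$ one by one and match them against the definitions in~\eqref{si}. This final part is purely mechanical bookkeeping; the substantive content of the lemma is the carry recursion together with the disjointness that keeps every carry binary.
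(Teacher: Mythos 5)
Your strategy is the right one, and it is in fact the same one the paper itself declares: the paper's proof of Lemma \ref{lemmasum} consists of a single sentence stating that the result follows by induction on the number of levels $L$, with all details deferred to the full version \cite{bzc}, so your write-up is strictly more detailed than what the paper provides. Your core computation is correct: applying \eqref{sum} twice to the three binary vectors present at level $i$ gives
\begin{equation*}
c_i + \tilde{c}_i + s_{i-1} = \bigl(s_{i-1} \oplus (c_i \oplus \tilde{c}_i)\bigr) + 2\bigl[(c_i \ast \tilde{c}_i) \oplus \bigl((c_i \oplus \tilde{c}_i) \ast s_{i-1}\bigr)\bigr],
\end{equation*}
and your disjoint-support observation (a coordinate where $c_i \ast \tilde{c}_i = 1$ forces $c_i \oplus \tilde{c}_i = 0$) is exactly what keeps every carry in $\{0,1\}^n$, legitimizes replacing $+$ by $\oplus$ inside the carry, and lets the induction advance to the next level. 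This correctly establishes \eqref{formulasum} with the carries generated by $s_0 = 0$ and $s_i = (c_i \ast \tilde{c}_i) \oplus \bigl((c_i \oplus \tilde{c}_i) \ast s_{i-1}\bigr)$.

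However, the step you wave off as ``purely mechanical bookkeeping'' is where your proof, as written, breaks: the recursion does \emph{not} unroll to \eqref{si} as printed. Distributivity gives $s_i = (c_i \ast \tilde{c}_i) \oplus r_i^1 \oplus \dots \oplus r_i^{i-1}$ with $r_i^j = (c_i \oplus \tilde{c}_i) \ast r_{i-1}^{j-1}$, whereas \eqref{si} defines $r_i^j = r_i^{j-1} \ast r_{i-1}^{j-1}$. Under the printed definition one has $r_i^j \equiv 0$ for every $j \geq 2$, because $r_i^{j-1}$ carries the factor $c_{i-1} \ast \tilde{c}_{i-1}$ while $r_{i-1}^{j-1}$ carries the factor $c_{i-1} \oplus \tilde{c}_{i-1}$, and these two vectors have disjoint supports. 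Indeed \eqref{si} as printed is false once $L \geq 3$: take $n=1$, $L=3$, $(c_1,c_2,c_3) = (1,1,1)$, $(\tilde{c}_1,\tilde{c}_2,\tilde{c}_3) = (1,0,0)$, $z = \tilde{z} = 0$; then $x + y = 7 + 1 = 8$, but \eqref{si} yields $s_3 = 0$ (since $r_3^1 = (c_3 \oplus \tilde{c}_3) \ast (c_2 \ast \tilde{c}_2) = 0$ and $r_3^2 = 0$), so the right-hand side of \eqref{formulasum} evaluates to $0$. Your recursion instead gives $s_3 = 1$ and the correct value $8$. So you cannot ``match the definitions in \eqref{si}''; what your argument actually proves is the lemma with the corrected recursion $r_i^j = (c_i \oplus \tilde{c}_i) \ast r_{i-1}^{j-1}$, and a complete answer must either carry out that corrected unrolling explicitly (the distributivity computation you sketch does exactly this) or flag the printed recursion for $r_i^j$ as an error in the statement.
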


\begin{proof} The proof is done by mathematical induction in the number of levels $L$ and it will be provided in the full paper \cite{bzc}.
\end{proof}

	The mathematical intuition behind Theorem \ref{thm} lies in the fact that since $a + b = a \oplus b + 2 (a \ast b)$  for $a,b \in \mathbb{F}_{2}^{n},$ when adding two points in $\Gamma_C$ or $\Gamma_{C^\star},$ each level $i \geq 2$ has the form of $c_{i} \oplus \tilde{c}_i \oplus carry_{(i-1)},$ where $carry_{(i-1)}$ is the "carry" term from the addition in the lower level. Since the projection code $\mathcal{C}_i$ is linear, $c_i \oplus \tilde{c}_i$ is a codeword in the $i-$th level. Hence, closeness of $\Gamma_{C^\star}$ under addition amounts to the fact that $carry_{(i-1)}$ is also a codeword in $\mathcal{C}_i,$ which is essentially the condition of the theorem. Formally, 
	

\begin{proof} [Proof of Theorem 2] $(\Leftarrow)$ 
 For any $x, y \in \Gamma_{C^{\star}},$ written as in Equations (\ref{eqx}) and (\ref{eqy}), we have $x+y$ as given in Lemma \ref{lemmasum} (Equations (\ref{formulasum}) and (\ref{si})) and we need to verify if $x+y \in \Gamma_{C^{\star}}.$ 

	Clearly $x+y \in \mathcal{C}_{1}+2\mathcal{C}_{2}+\dots+2^{L-1}\mathcal{C}_{L}+2^{L}\mathbb{Z}^{n}.$ It remains to demonstrate that $(c_{1}\oplus \tilde{c}_{1},  s_{1} \oplus c_{2}\oplus \tilde{c}_{2}, \dots, s_{L-1} \oplus c_{L} \oplus \tilde{c}_{L}) \in \mathcal{C}.$
	
	Indeed, using the fact that the chains  $\mathcal{C}_{i-1} \subseteq \mathcal{S}_{i}(0, \dots, 0)$ for all $i=2, \dots, L$ are closed under the Schur product, it is an element of $\mathcal{C}$ because it is a sum of elements in $\mathcal{C},$ i.e., 
\begin{eqnarray}
&& (c_{1}\oplus \tilde{c}_{1},  s_{1} \oplus c_{2}\oplus \tilde{c}_{2}, \dots, s_{L-1} \oplus c_{L}\oplus \tilde{c}_{L}) =  \nonumber \\
&&\underbrace{(c_{1}\oplus \tilde{c}_{1},c_{2}\oplus \tilde{c}_{2}, \dots,  c_{L}\oplus \tilde{c}_{L})}_{\in \mathcal{C}} \oplus  \underbrace{(0,s_{1}, \dots, 0)}_{\in \mathcal{C}} \oplus \dots \oplus \nonumber \\
&&\oplus \underbrace{(0,\dots, 0, s_{L-1})}_{\in \mathcal{C}}.
\end{eqnarray}
Observe that any $nL-$tuple $(0, \dots, s_{i-1}, \dots, 0)$ is in $\mathcal{C}$ because by hypothesis, the chain $\mathcal{S}_{i}(0, \dots, 0)$ closes $\mathcal{C}_{i-1}$ under Schur product, hence $S_{i}(0, \dots, 0)$ contains $(c_{i-1}*\tilde{c}_{i-1}), r_{i-1}^1,....,r_{i-1}^{i-2}$ which is sufficient to guarantee that $s_{i-1} \in \mathcal{S}_{i}(0,\dots,0)$ so $(0, \dots, s_{i-1}, \dots, 0) \in \mathcal{C},$ for all $i=2, \dots, L-1.$


\noindent $(\Rightarrow)$ For the converse, we know that $\Gamma_{C^\star}$ is a lattice, which implies that if $x,y \in \Gamma_{C^\star}$ then $x+y \in \Gamma_{C^\star}.$ From the notation and result from Lemma \ref{lemmasum}, more specifically Equations (\ref{eqx}), (\ref{eqy}), (\ref{formulasum}) and (\ref{si}), it means that 
\begin{equation}
(c_{1}\oplus \tilde{c}_{1}, s_{1} \oplus (c_{2}\oplus \tilde{c}_{2}), \dots,s_{L-1} \oplus (c_{L}\oplus \tilde{c}_{L})) \in \mathcal{C}.
\end{equation}

	We can write this $L-$tuple as
{\small\begin{eqnarray}
& & \underbrace{(c_{1}\oplus \tilde{c}_{1}, s_{1} \oplus (c_{2}\oplus \tilde{c}_{2}), \dots,s_{L-1} \oplus (c_{L}\oplus \tilde{c}_{L}))}_{\in \mathcal{C}}  = \nonumber \\ 
& & \underbrace{(c_{1}\oplus \tilde{c}_{1}, c_{2}\oplus \tilde{c}_{2}, \dots, c_{L}\oplus \tilde{c}_{L})}_{\in \mathcal{C}, \ \text{by linearity of} \ \mathcal{C}} \oplus (0, s_{1}, \dots, s_{L-1}) \\
& \Rightarrow & (0, s_{1}, \dots, s_{L-1}) \in \mathcal{C}. \label{eqshurp}
\end{eqnarray}}
Notice that we have 
\begin{eqnarray}
s_{1} & = & c_{1} \ast \tilde{c}_{1} \\
s_{2} & = &((c_{1} \ast \tilde{c}_{1}) \ast (c_{2} \oplus \tilde{c}_{2})) \oplus (c_{2} \ast \tilde{c}_{2}) \\
s_{3} & = & ((c_{3} \nonumber \oplus \tilde{c}_{3}) \ast (c_{2} \ast \tilde{c}_{2})) \ast (c_{2} \oplus \tilde{c}_{2} \ast (c_1 \ast \tilde{c}_{1})) \nonumber \\
&  & \oplus \ ((c_{3} \oplus \tilde{c}_{3}) \ast (c_{2} \ast \tilde{c}_{2})) \oplus (c_{3} \ast \tilde{c}_{3}) \\
& \vdots & \nonumber
%
\end{eqnarray}

	Due to the nesting $\mathcal{C}_{1} \subseteq \mathcal{S}_{2}(0,\dots,0) \subseteq \dots \subseteq \mathcal{C}_{L-1} \subseteq \mathcal{S}_{L}(0, \dots,0) \subseteq \mathcal{C}_{L},$ we can guarantee that there exist codewords whose particular Schur products $c_{i} \ast \tilde{c}_{i} = 0,$ for $i=1, \dots, L-2.$ Thus, 
\begin{equation}
s_{L-1} = (c_{L-1} \ast \tilde{c}_{L-1})
\end{equation}
and from Equation (\ref{eqshurp}), $(0,0, \dots, c_{L-1} \ast \tilde{c}_{L-1}) \in \mathcal{C},$ i.e., $S_{L}(0, \dots, 0)$ must close $\mathcal{C}_{L-1}$ under Schur product. Proceeding similarly, we demonstrate that $S_{i}(0, \dots, 0)$ must close $\mathcal{C}_{i-1},$ for all $i=2, \dots, L$ and it completes our proof.



\end{proof} 

\section{Conclusion and future work}

	In this paper a new method of constructing constellations was introduced, denoted by Construction $C^\star,$ which is subset of Construction C and is based on a modern coding scheme, the bit-interleaved coded modulation (BICM). It was proved when this construction is a lattice and how to describe the Leech lattice using this technique. 
	
	Our future work include examining on a comparative basis the advantages of Construction $C^{\star}$ compared to Construction C in terms of packing density. We also aim to change the natural labeling $\mu$ to the Gray map, the standard map used in BICM. Another direction to be completed is to find a more complete condition for the latticeness of Construction $C^{\star},$ that covers cases such as the one in Example \ref{exnonesting}.


\section*{Acknowledgment}

	CNPq (140797/2017-3, 312926/2013-8) and FAPESP (2013/25977-7) supported MFB and SIRC. Israel Science Foundation grant 676/15 supported RZ.

\end{document}